\documentclass[journal,10pt]{IEEEtran}

\pdfoutput=1

\hyphenation{op-tical net-works semi-conduc-tor}
\usepackage{amsmath}
\usepackage{cite}
\usepackage{xcolor} 
\usepackage{algorithm}
\usepackage{algorithmicx}
\usepackage[noend]{algpseudocode} 
\usepackage{amsfonts}
\usepackage{graphicx}
\usepackage{epsfig}
\usepackage{amsthm}
\usepackage{hyperref}
\usepackage{subfigure}
\usepackage{extarrows}
\usepackage{lettrine}

\theoremstyle{definition}
\newtheorem{assumption}{Assumption}
\theoremstyle{definition}
\newtheorem{lemma}{Lemma}
\theoremstyle{plain}
\newtheorem{theorem}{Theorem}
\hyphenation{op-tical net-works semi-conduc-tor}

\begin{document}

\title{\LARGE Communication-Efficient Model Aggregation with Layer Divergence Feedback in Federated Learning}

\author{Liwei Wang, Jun Li, Wen Chen, Qingqing Wu, and Ming Ding
\thanks{Liwei Wang, Wen Chen and Qingqing Wu are with the Broadband Access
Network Laboratory, Shanghai Jiao Tong University, Minhang 200240, China
(e-mail: wanglw2000; wenchen; qingqingwu@sjtu.edu.cn). Jun Li is with School of Electrical and Optical Engineering, Nanjing University of Science and Technology, Nanjing 210094, China
Ministry of Education (e-mail: jun.li@njust.edu.cn).
Ming Ding is with Data61, CSIRO, Sydney, NSW 2015, Australia (e-mail:
ming.ding@data61.csiro.au).}

}
\maketitle

\begin{abstract}
Federated Learning (FL) facilitates collaborative machine learning by training models on local datasets, and subsequently aggregating these local models at a central server. However, the frequent exchange of model parameters between clients and the central server can result in significant communication overhead during the FL training process. To solve this problem, this paper proposes a novel FL framework, the Model Aggregation with Layer Divergence Feedback mechanism (FedLDF). Specifically, we calculate model divergence between the local model and the global model from the previous round. Then through model layer divergence feedback, the distinct layers of each client are uploaded and the amount of data transferred is reduced effectively. Moreover, the convergence bound reveals that the access ratio of clients has a positive correlation with model performance. Simulation results show that our algorithm uploads local models with reduced communication overhead while upholding a superior global model performance.
\end{abstract}
\begin{IEEEkeywords}
Federated learning, layer divergence feedback, communication efficiency.
\end{IEEEkeywords}

\IEEEpeerreviewmaketitle

\section{Introduction}
\lettrine{F}{ederated} Learning (FL) is a distributed framework, which can obtain a global model by independently training local models on each client and aggregating their parameters on a central server\cite{McMahan2023}.
During the FL training process, the frequent exchange of model parameters can cause huge communication overhead, resulting in the client experience decline due to the greater latency and energy consumption. This challenge stems from the inherent limitations of clients, which are often equipped with constrained hardware resources and network bandwidth\cite{han2023analysis}, thereby hindering the deployment of large and deep neural networks. Consequently, considering the limited communication resources between the clients and the server, achieving more efficient communication has become a pivotal concern in the development and implementation of FL.

Prior studies have employed model compression techniques to mitigate the communication overhead associated with FL. Model compression seeks to enhance the speed of model reasoning and conserve storage space by diminishing the volume of neural networks. For instance, authors in \cite{nam2022} re-parameterized weight parameters of layers using low-rank weights to overcome the burdens on frequent model uploads and downloads. 
A new quantization method was explored in \cite{han2024energy} to reduce energy consumption and communication overhead, with allocating the quantization level adaptively. Additionally, a dataset-aware dynamic pruning approach \cite{yu2023} was proposed to accelerate the inference on edge devices and overcome the  communication redundancy issues. The work in \cite{Liu2023globalcom} dynamically determines the pruning ratio of each layer of the client to reduce the consumption of communication resources. 
Furthermore, the authors of \cite{zawad2023hdfl} introduced random dropout into client selection for each round of aggregation to reduce communication consumption.

However, most of the above sparse or pruning methods apply traditional machine learning methods directly to the FL framework. Authors in \cite{cheng2022does} pointed out that dropout does not seem to be an effective way for the distributed architecture of FL based on their experimentation. Furthermore, \cite{kairouz2021advances} suggested that the independent operation of parameters across different layers within FL might impact the overall model performance. Additionally, the application of fine density pruning and quantization to individual layers in these studies, introduces an extra computational overhead.


In FL, models from each client make distinct contributions, necessitating the need to tailor the model for individual clients. A scheme was proposed in \cite{tao2023communication} to evaluate the channel-wise parameter importance dynamically via a fast Taylor series evaluation. \cite{rehman2023} weighted each layer of the client and aggregates them but without accounting for the communication overhead. To reduce communication overhead while maintaining high global model performance in FL, we proposed a novel model aggregation method (FedLDF), which takes into account the layer differences among clients to reduce the communication overhead effectively, and our primary contributions can be summarized as follows:
\begin{itemize}
\item[$\bullet$] We introduce a new FL aggregation mechanism, FedLDF. By incorporating model divergence feedback into the traditional FL framework, clients do not contribute equivalently, involving different layers uploading.
\item[$\bullet$] We employ a novel and simpler convergence analysis method to conduct a fundamental analysis on an expression for the expected convergence rate of FedLDF. The analysis reveals that the accessing ratio of clients significantly influences the convergence speed.
\item[$\bullet$]We conduct experiments to assess the communication efficiency. The experimental results reveal that the new aggregation mechanism achieves significant reductions of $80\%$ in communication overhead across different layering situations while maintaining high model performance.
\end{itemize}

\section{Framework and convergence analysis}

\subsection{FedLDF Algorithm}
 It is assumed that there are $N$ distributed clients, each with its own local dataset $\{D_1,D_2,\cdots,D_N\}$ and local models $\{\Theta_1^t,\Theta_2^t,\cdots,\Theta_N^t\}$ at $t$-th communication round. FL executes the procedures of clients' local training and server's periodic model aggregation. At each communication round $t$, the server selects a set of participators with $K$ clients as $\mathcal{C}_t$. With FedAvg\cite{McMahan2023}, the parameter aggregation process is:
\begin{equation}  
	\bar{\Theta}^t = \sum_{k\in \mathcal{C}_t} \frac{\vert D_k \vert}{\sum_{m\in \mathcal{C}_t} \vert D_m \vert} \Theta_k^t,
\end{equation}
where $\Bar{\Theta}^t$ is the global model after the $t$-th global aggregation, $D_k$ is the datasets of the client $k$, $\vert \cdot \vert$ represents the size of the datasets, and $\Theta_{k,t}$ is the local model of client $k$ after the $t$-th local training. The client $k$ computing the gradient to update the local model parameters with its local datasets is as follows:
\begin{equation}
    \Theta_k^t = \Theta_k^{t-1} - \eta \nabla F_k\left(\Theta_k^{t-1} \right),
\end{equation}
where $F_k$ is the local loss function computed by the client $k$, $\nabla(\cdot)$ expresses the gradient and $\eta$ is the learning rate.

In traditional machine learning, as it would result in the loss of the model and degradation in performance, we can not afford to abandon any layer of the neural network, which differs from the FL framework. Hence, we designed FedLDF algorithm whose feasibility is derived from the special distributed architecture of FL. The primary objective of FedLDF is to \textit{upload local models with lower communication overhead while maintaining a high global model performance in the FL process}. As illustrated in Fig. \ref{flsp}, unlike the general FL framework, the server receives different network layers for each client due to the model divergence feedback, which describes the difference between the local model and the global model of the last round.



\begin{figure}[!t]
\centering
\includegraphics[width=0.4\textwidth]{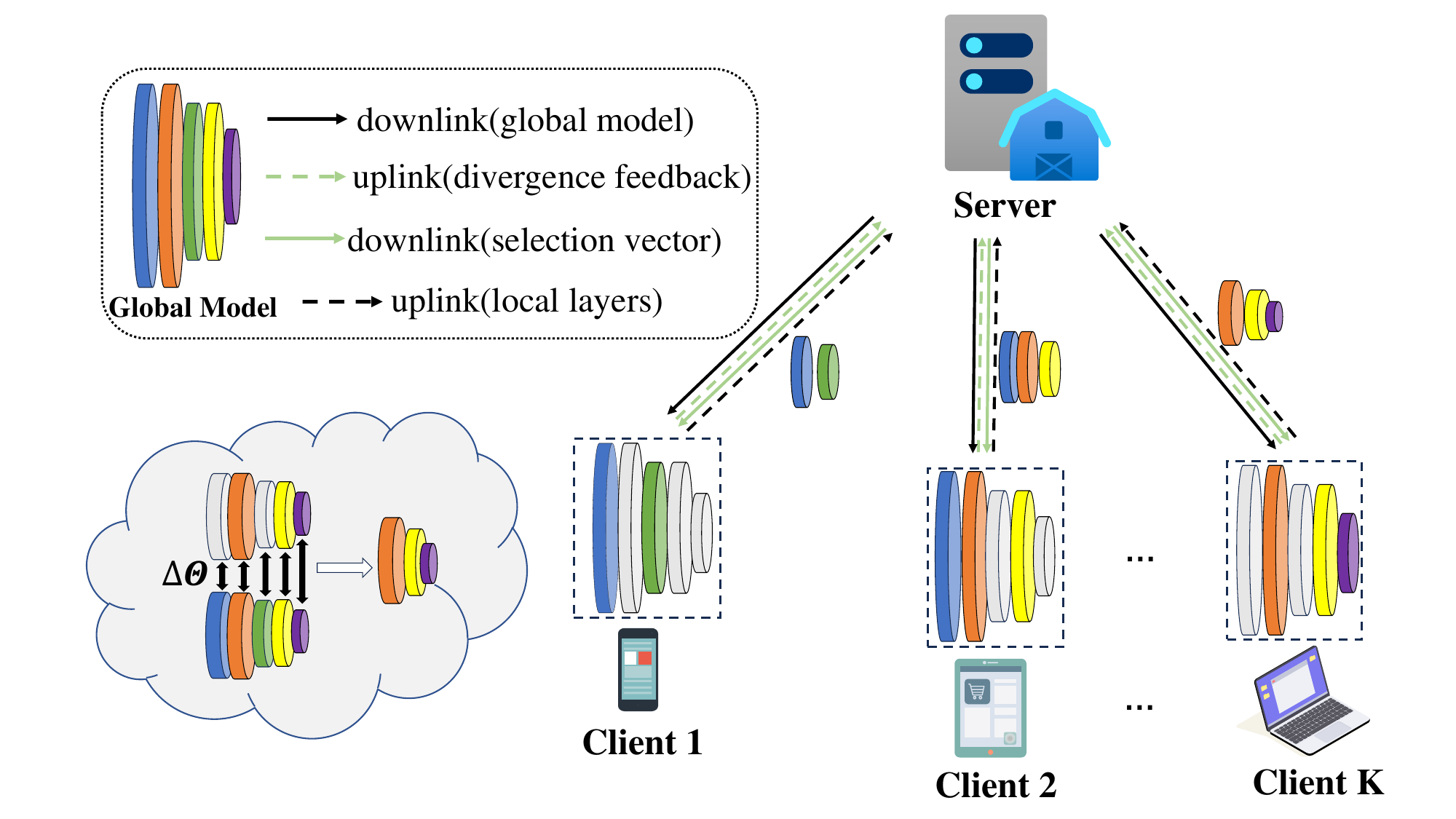}
\caption{FedLDF framework.}
\label{flsp}
\end{figure}

It is worth noting that our algorithm is applied before model aggregation process. Step1: At each communication round, the server broadcasts its global model $\hat{\boldsymbol{\Theta}}^t$ to all clients, and clients train their models with their local data. Step2: FedLDF selects a set of $K$ clients as $\mathcal{C}_t$ randomly. After one time local training, each client obtains its local model $\Theta_k^t$. We assume each client has its local model $\Theta$ with $L$ layers, which may consist of Convolutional Neural Network (CNN) layers and Fully Connected (FC) layers. The local model is represented as $\Theta_k^t=[\Theta_{k,1}, \Theta_{k,2}, \cdots,\Theta_{k,l}, \cdots, \Theta_{k,L}]$, where $\Theta_{k,l}$ represents the $l$-th layer's parameters of client $k$. Step3: clients in $\mathcal{C}_t$ calculate the divergence for each layer between their trained model and the global model of the previous communication round:
\begin{equation}
    \Delta\Theta_{k,l}^{t+1} =\left\Vert \Theta_{k,l}^{t+1} - \hat{\Theta}_{l}^{t}\right\Vert = \sqrt{\sum_{i \in l} \left(\Theta_{k,li}^{t+1} - \hat{\Theta}_{li}^{t} \right)^2},
    \label{eqdivergence}
\end{equation}
where $\hat{\boldsymbol{\Theta}}_{l}^{t}$ represents the $t$-th rounds that is the last global model before this local training process and $\Vert \cdot \Vert$ is $L2$-norm. $\Theta_{k,li}$ is the $i$-th neuron parameter of the $l$-th layer from client $k$. All gap scalars $\Delta\Theta_{k,1}^{t+1},\Delta\Theta_{k,2}^{t+1},\cdots,\Delta\Theta_{k,L}^{t+1}$ are stacked into a divergence vector $\Delta\Theta_{k}^{t+1}$.

As shown in Fig. \ref{flsp}, FedLDF involves two additional transmission steps compared to the traditional FL framework, represented by the green lines. The green dotted line indicates that the participating clients upload $\Delta \boldsymbol{\Theta}_k^{t+1}$ to the server side, i.e., model layer divergence feedback step.It is easy to understand that the greater the divergence, the greater the contribution to the model updating\cite{rehman2023}. We take top-$n$ for each row of the divergence matrix to get the client selection vector $\boldsymbol{s}_k^l$, which determines the layers from the trained client to upload to the server and complete the aggregation process. The element $s_k^l \in \{0,1\}$ indicates whether layer $l$ of client $k$ is allocated in the global aggregation:
\begin{equation}
     s_k^l = \begin{cases}
                1,& \text{the index of $s_k^l$ is in the top-$n$, }\\
                0,& \text{otherwise.}
            \end{cases}
    \label{eqselection}
\end{equation}
Herein, after server generates the selection vector, FedLDF modifies the traditional FL aggregation mechanism according to the new rule:
\begin{equation}
    \hat{\Theta}_l^t = \sum_{k\in \mathcal{C}_t} \frac{s_k^l  \vert D_k\vert \cdot \Theta_{k,l}^t}{\sum_{m\in \mathcal{C}_t}\vert D_m\vert s_m^l} ,
    \label{eqaggregation1}
\end{equation}
\begin{equation}
    \hat{\Theta}^t = \left[\hat{\Theta}_1^t,\hat{\Theta}_2^t,\cdots,\hat{\Theta}_l^t,\cdots,\hat{\Theta}_L^t\right].
    \label{eqaggregation2}
\end{equation}
Considering this, the updated global model $\hat{\boldsymbol{\Theta}}^t$ is formed by the layer model $\hat{\boldsymbol{\Theta}}_l^t$ aggregated by the chosen clients. The entire algorithm flow is illustrated in Algorithm \ref{alg:FLSP}, and each client contributes its different layers of the model in Fig. \ref{layer}.
\begin{algorithm}[htb]
    \renewcommand{\algorithmicrequire}{\textbf{Input:}}
    \renewcommand{\algorithmicensure}{\textbf{Output:}}
    \caption{FedLDF Algorithm}
    \label{alg:FLSP}
    \begin{algorithmic}[1] 
        \Require  dataset$\{D_1,D_2,\cdots,D_N\}$, learning rate $\eta$, total communication rounds $T$; 
	    \Ensure  global model $\hat{\Theta}^t$; 
         \State \textbf{Initialize}:the initial global model $\hat{\Theta}^0$ and system parameters;
         \Procedure{ServerExecute}{}
            \For {communication round $t=1,2,\cdots,T$}
                \State Broadcast global model $\hat{\Theta}^t$ to all clients;
                \State $\mathcal{C}_t \leftarrow random(K, max(C*N, 1))$;
                \For{each client $k$ in $\mathcal{C}_t$ \textbf{in parallel}} 
                    
                    
                    \State $\hat{\Theta}_k^{t+1} \leftarrow ClientUpdate(k, \hat{\Theta}^t)$;
                    \State Calculate divergence $\Delta \Theta$ by \eqref{eqdivergence};
                    \State Generate a client selection
                    \Statex \qquad\qquad\qquad vector $\boldsymbol{s}_k^l$ for each layer as \eqref{eqselection};
                \EndFor
                \State Aggregate each layer wisely with $\boldsymbol{s}_k^l$ by \eqref{eqaggregation1}\eqref{eqaggregation2};

            \EndFor
        \EndProcedure

        \Procedure{ClientUpdate}{}
            \State Client $k$ receives $\hat{\Theta}^{t+1}$ from the server;
            \State Set $\Theta_k \leftarrow \hat{\Theta}^{t+1}$;
                        \State $\Theta_k \leftarrow \Theta_k - \eta \nabla F_k\left(\Theta_k\right)$;

            \State \textbf{return} $\Theta_k$
        \EndProcedure
    \end{algorithmic}

\end{algorithm}
\begin{figure}[!t]
\centering
\includegraphics[width=0.5\textwidth]{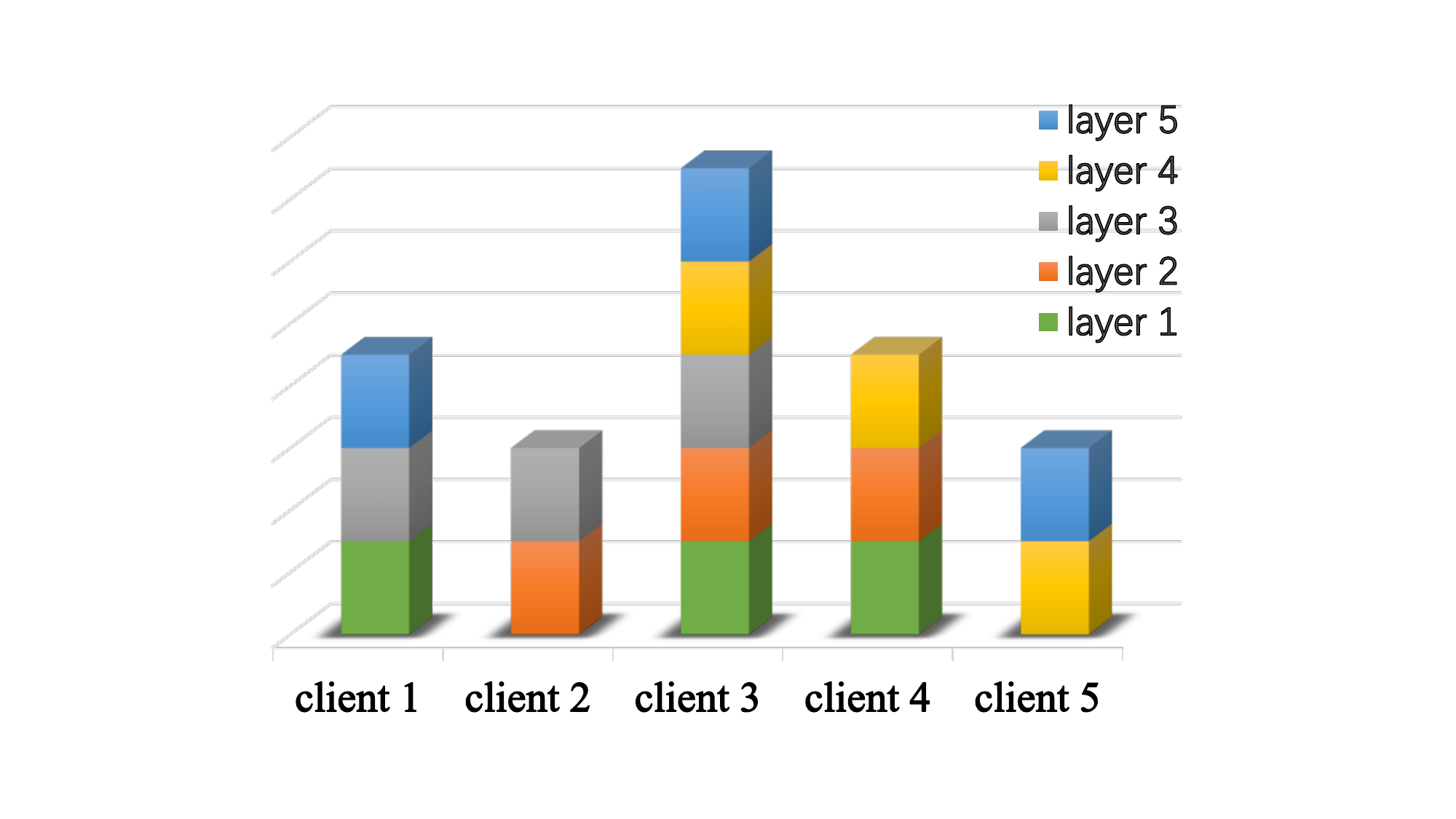}
\caption{Under the five-layer neural network of five clients, each client uploads the different local model layers in the aggregation$(n=3,K=5)$.}
\label{layer}
\end{figure}

We observe that the FedLDF algorithm considers the layer as the fundamental pruning unit, making it well-suited for various layer-dominated networks such as DNN, VGGnet, ResNet, and so on. Furthermore, FedLDF not only diminishes communication consumption but also conserves computing resources for edge devices, as clients are relieved from the burden of determining the model type in FL.

\subsection{Convergence Analysis}
In this section, we delve into the convergence behaviour of our FedLDF learning algorithm. To simplify the formula, we consider that the dataset size is equal for each client. We propose \textit{a novel and simple convergence analysis method}. It is well-established that the FedAvg algorithm is convergent with the convergence rate being $\mathcal{O}(\frac{1}{T})$\cite{fedavgconvergence}. Hence we directly analyze the divergence between the global loss functions of FedLDF and FedAvg, i.e., $F(\hat \Theta)$ and $F(\bar \Theta)$, to obtain the convergence result of FedLDF. 

To facilitate analysis, we assume that the following assumptions are satisfied, which are widely used in previous works on the convergence analysis.
\begin{assumption}
    ($\beta$-smooth) We assume that $F(\Theta)$ is convex and $\beta$-smooth.
\end{assumption}
\begin{assumption}
    (Gradient Divergence) We assume that the relationship between the gradient of the client $k$'s $l$-th layer $\nabla F_{k,l}(\Theta)$ and the global gradient $\nabla F(\Theta)$ is as followed\cite{mingzhechen2021}, with $\xi_1,\xi_2 \geq 0$.
    \begin{equation}
        ||\nabla F_{k,l}(\Theta)||^2 \leq \xi_1+\xi_2||\nabla F(\Theta)||^2.
        \label{assump2}
    \end{equation}
\end{assumption}
\begin{assumption}
    (Bounded Gradient) We assume that the value of the global gradient has an upper bound, i.e.,
    \begin{equation}
        ||\nabla F(\Theta)||^2 \leq G^2.
        \label{assump3}
    \end{equation}
\end{assumption}

We suppose the global aggregation of FedLDF occurs after each client completes one round of local updating. Following a similar approach to \cite{ShiqiangWang2019}, we suppose the model parameters of FedAvg as assisted variables, which share the same parameter starting point before each local updating with our algorithm. The expected convergence rate of the algorithm can now be obtained by the following theorem.

\begin{theorem}
    Given the number of clients $K$ participated in local training, the number of clients $n$ that upload the $l$-th layer and the learning rate $\eta$, with the assumptions, when the condition $0 < \xi_2 < \frac{1}{2(1+\beta)\eta^2 L^2}$ is satisfied, then the convergence upper bound of Algorithm 1 is given by
    \begin{small}
    \begin{equation}
    F(\hat{\Theta}^{t+1})-F( \Bar{\Theta}^{t+1}) \leq A^t \left[F(\hat{\Theta}^0)-F( \Bar{\Theta}^0) \right]+ B\frac{1-A^t}{1-A},
    \end{equation}
    \end{small}where A = $2\xi_2\eta^2 L^2(1-\frac{n}{K})\left[1+\beta(1-\frac{n}{K}) \right]$ and B = $\frac{\xi_1}{\xi_2}A+(1-\frac{n}{K})\frac{G^2}{2}$.
    \label{theorem1}
\end{theorem}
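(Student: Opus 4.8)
The plan is to reduce the bound to a one-step contraction and then unroll it. Writing $a_t := F(\hat{\Theta}^t) - F(\bar{\Theta}^t)$, I aim to establish a recursion of the form $a_{t+1} \le A\,a_t + B$; once this is in hand, iterating it and summing the resulting geometric series yields $a_t \le A^t a_0 + B\frac{1-A^t}{1-A}$, which is exactly the claimed closed form. The hypothesis $0<\xi_2<\frac{1}{2(1+\beta)\eta^2 L^2}$ is precisely what I would use at the end to certify $0<A<1$: since $(1-\frac{n}{K})\left[1+\beta(1-\frac{n}{K})\right]\le 1+\beta$, the bound on $\xi_2$ forces $A<1$, so the recursion is genuinely contractive and the geometric series converges. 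The conceptual work is therefore entirely in proving the single-step inequality.

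To set up the single step I would invoke the assisted-variable device: by hypothesis the FedAvg iterate $\bar{\Theta}^{t}$ and the FedLDF iterate $\hat{\Theta}^{t}$ share the same starting point before the round's local update, so each client performs one gradient step $\Theta_{k,l}^{t+1}=\hat{\Theta}_l^{t}-\eta\nabla F_{k,l}(\hat{\Theta}^{t})$ from that common point. The first key step is then purely algebraic: comparing the full $K$-average (FedAvg) with the per-layer top-$n$ average over the selected set $S_l=\{k:s_k^l=1\}$ (FedLDF) and showing that, layer by layer, the difference factors as $\hat{\Theta}_l^{t+1}-\bar{\Theta}_l^{t+1}=\eta\bigl(1-\tfrac{n}{K}\bigr)\bigl[\tfrac{1}{K-n}\sum_{k\notin S_l}\nabla F_{k,l}(\hat{\Theta}^t)-\tfrac{1}{n}\sum_{k\in S_l}\nabla F_{k,l}(\hat{\Theta}^t)\bigr]$. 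This cleanly extracts the access-ratio factor $1-\frac{n}{K}$ that pervades both $A$ and $B$, and is the source of the ``higher access ratio improves convergence'' conclusion.

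With this identity I would apply Assumption 1 ($\beta$-smoothness with convexity) to expand $a_{t+1}=F(\hat{\Theta}^{t+1})-F(\bar{\Theta}^{t+1}) \le \langle\nabla F(\bar{\Theta}^{t+1}),\hat{\Theta}^{t+1}-\bar{\Theta}^{t+1}\rangle+\frac{\beta}{2}\Vert\hat{\Theta}^{t+1}-\bar{\Theta}^{t+1}\Vert^2$. The linear inner-product term, bounded by Cauchy--Schwarz/Young's inequality together with the bounded-gradient Assumption 3 ($\Vert\nabla F\Vert^2\le G^2$), should contribute the first-power-in-$(1-\frac{n}{K})$ pieces, including the $(1-\frac{n}{K})\frac{G^2}{2}$ summand of $B$; the quadratic term, being proportional to $\Vert\hat{\Theta}^{t+1}-\bar{\Theta}^{t+1}\Vert^2$, carries an extra factor $(1-\frac{n}{K})$ and the $\beta$, which is what produces the $\beta(1-\frac{n}{K})$ inside the bracket of $A$. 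Summing the layer-wise bounds over the $L$ layers and applying the gradient-divergence Assumption 2, $\Vert\nabla F_{k,l}\Vert^2\le\xi_1+\xi_2\Vert\nabla F\Vert^2$, yields the $\xi_1,\xi_2$ dependence and the $L^2$ scaling, with the $\xi_1$ part collecting into the $\frac{\xi_1}{\xi_2}A$ summand of $B$.

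The main obstacle I anticipate is closing the recursion, i.e.\ turning the residual term $\xi_2\Vert\nabla F(\hat{\Theta}^{t})\Vert^2$ that survives from Assumption 2 back into the recurrent quantity $a_t=F(\hat{\Theta}^t)-F(\bar{\Theta}^t)$ rather than merely bounding it by the constant $G^2$; this is what supplies the coefficient $A$ in front of $a_t$ and is essential for obtaining a contraction instead of a purely additive error. I would handle it by exploiting convexity of $F$ with the FedAvg iterate $\bar{\Theta}^t$ as the reference point---legitimate here because the FedAvg path is assumed to converge at rate $\mathcal{O}(\frac{1}{T})$, so $\bar{\Theta}^t$ may be treated as the effective minimizer against which the gradient norm at $\hat{\Theta}^t$ is measured. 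Carefully allocating the gradient-norm term between this $a_t$-producing contribution and the constant $G^2$-contribution to $B$, while keeping the powers of $(1-\frac{n}{K})$ straight, is the delicate bookkeeping; once done, collecting terms gives exactly $A$ and $B$, and the geometric unrolling finishes the proof.
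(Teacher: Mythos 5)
Your plan follows the paper's own proof essentially step for step: the same assisted-variable setup, the same layer-wise identity $\hat{\Theta}_l^{t+1}-\bar{\Theta}_l^{t+1}=\eta\bigl(1-\tfrac{n}{K}\bigr)\bigl[\tfrac{1}{K-n}\sum_{k\notin S_l}\nabla F_{k,l}-\tfrac{1}{n}\sum_{k\in S_l}\nabla F_{k,l}\bigr]$ (the paper's Lemma~1 in an equivalent form), the same smoothness expansion with Young's inequality and the bounded-gradient assumption feeding the $(1-\tfrac{n}{K})\tfrac{G^2}{2}$ and $\beta(1-\tfrac{n}{K})$ terms, and the same device of treating $\bar{\Theta}^t$ as a surrogate minimizer to turn $\Vert\nabla F(\hat{\Theta}^t)\Vert^2$ back into $F(\hat{\Theta}^t)-F(\bar{\Theta}^t)$ before geometrically unrolling the recursion. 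The only difference is cosmetic (you invoke convexity where the paper invokes the smoothness inequality $\Vert\nabla F\Vert^2\le 2\beta(F-F^\ast)$ for that last conversion, an equally informal step), so the proposal is correct to the same extent the paper's argument is.
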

\begin{proof} [Proof of Theorem \ref{theorem1}]
    We first derive \textbf{Lemma \ref{lemma1}}, which can play a key role in our subsequent derivation.
    \begin{lemma}
    \textit{With the same variables and conditions as Theorem \ref{theorem1}, the layer model gap between FedLDF and FedAvg at the (t+1)-th communication round is given by}
    \begin{small}
    \begin{align}
        || \hat{\Theta}_l^{t+1} &- \Bar{\Theta}_l^{t+1} ||^2 \notag\\
        \leq& 4\eta^2(1-\frac{n}{K})^2\left(\xi_1+\xi_2||\nabla F(\hat{\Theta}^t)||^2 \right).
    \end{align}
    \end{small}
    \label{lemma1}
\end{lemma}
\begin{proof}[Proof of Lemma \ref{lemma1}]
    \begin{small}
    \begin{align}
    & \hat{\Theta}_l^{t+1} - \Bar{\Theta}_l^{t+1} = \frac{\sum_k s_k^l \hat{\Theta}_{k,l}^{t+1}}{n}-(\hat{\Theta}_l^t-\eta \nabla F_{.,l}(\hat{\Theta}^t)). \notag
    \end{align}
    \end{small}Noting that these calculations are for a certain layer $l$ of the model, and $\sum_k$ means going over the sum from $1$ to $K$. For $\sum_k s_k^l=n$, and $\hat{\Theta}_{k,l}^{t+1}=\hat{\Theta}_l^t- \eta \nabla F_{k,l}(\hat{\Theta}^t)$, then we have
    \begin{small}
    \begin{align}
    &\quad \hat{\Theta}_l^{t+1} - \Bar{\Theta}_l^{t+1} \notag\\
    =&\frac{\eta \sum_k s_k^l}{n} \nabla F_{k,l}(\hat{\Theta}^t)-\eta \nabla F_{.,l}(\hat{\Theta}^t)  \notag\\
    =&\frac{\eta}{n}\left( \sum_k s_k^l \nabla F_{k,l}(\hat{\Theta}^t)-\frac{n\sum_k\nabla F_{k,l}(\hat{\Theta}^t)}{K} \right). \label{eqlemma1}
    \end{align}
    \end{small}We divide $K$ clients into two sets of clients who participate in the aggregation of a certain layer and the others not. If the $l$-th layer of client $k$ is selected into the current round of aggregation, i.e., $s_k^l=1$ then $k \in U_{in}^l$ and $s_k^l=0$ else $k\in U_{out}^l$, otherwise. Hence, \eqref{eqlemma1} can be rewritten as 
    \begin{small}
    \begin{align}
        &\left \Vert\hat{\Theta}_l^{t+1} - \Bar{\Theta}_l^{t+1} \right\Vert^2 \notag\\
        =& \frac{\eta^2}{n^2}\left \Vert\sum_{k\in U_{in}^l}(1-\frac{n}{K}) \nabla F_{k,l}(\hat{\Theta}^t) -\sum_{k\in U_{out}^l}\frac{n}{K}\nabla F_{k,l}(\hat{\Theta}^t)\right\Vert^2\notag\\
        \overset{(a)}{\leq}&\frac{\eta^2}{n^2}\left[\sum_{k\in U_{in}^l}(1-\frac{n}{K}) \left \Vert\nabla F_{k,l}(\hat{\Theta}^t)\right\Vert+\sum_{k\in U_{out}^l}\frac{n}{K} \left \Vert\nabla F_{k,l}(\hat{\Theta}^t)\right\Vert\right]^2\notag\\
        \leq& \frac{\eta^2}{n^2} \Bigg[ n(1-\frac{n}{K})\sqrt{\xi_1+\xi_2\left\Vert\nabla F(\hat{\Theta}^t)\right\Vert^2} \notag\\
        &\qquad +\frac{n}{K}(K-n)\sqrt{\xi_1+\xi_2\left\Vert\nabla F(\hat{\Theta}^t)\right\Vert^2} \Bigg]^2 \notag\\
        =&4\eta^2(1-\frac{n}{K})^2(\xi_1+\xi_2\left\Vert\nabla F(\hat{\Theta}^t)\right\Vert^2). \label{fangsuo}
    \end{align}
    \end{small}\end{proof}The triangle inequality is applied to inequality (a). This completes the proof of \textbf{Lemma \ref{lemma1}}. And then we start with the proof of \textbf{Theorem \ref{theorem1}}.As Assumption 1, we have
    \begin{small}
    \begin{align}
        F(\hat{\Theta}&^{t+1})-F( \Bar{\Theta}^{t+1}) \notag\\
        \leq& \nabla F^T(\Bar{\Theta}^{t+1})(\hat{\Theta}^{t+1}- \Bar{\Theta}^{t+1})+\frac{\beta}{2}\left\Vert\hat{\Theta}^{t+1} - \Bar{\Theta}^{t+1}\right\Vert^2 \notag\\
        \leq&\left\Vert\nabla F(\Bar{\Theta}^{t+1})\right\Vert\cdot \left\Vert\hat{\Theta}^{t+1}- \Bar{\Theta}^{t+1}\right\Vert+\frac{\beta}{2}\left\Vert\hat{\Theta}^{t+1} - \Bar{\Theta}^{t+1}\right\Vert^2. \label{loss_gap}
    \end{align}
    \end{small}Noticing that applying triangle inequality again, we have
    \begin{small}
    \begin{align}
        ||\hat{\Theta}^{t+1} - \Bar{\Theta}^{t+1}|| =& ||\sum_{l=1}^L(\hat{\Theta}_l^{t+1} - \Bar{\Theta}_l^{t+1})|| \notag \\
        \leq& L\cdot \max_l ||\hat{\Theta}_l^{t+1} - \Bar{\Theta}_l^{t+1}||. \notag
    \end{align}
    \end{small}And then we can substitute the above inequation into \eqref{loss_gap}, divide $1-\frac{n}{K}$ into $\sqrt{1-\frac{n}{K}}$ squared and apply \textbf{Lemma \ref{lemma1}}, continuing to deduce that
    \begin{small}
    \begin{align}
        &  F(\hat{\Theta}^{t+1})-F( \Bar{\Theta}^{t+1}) \leq L\left\Vert\nabla F(\Bar{\Theta}^{t+1})\right\Vert\cdot \max_l \left\Vert\hat{\Theta}_l^{t+1}- \Bar{\Theta}_l^{t+1}\right\Vert \notag\\
        &\quad+\frac{\beta L^2}{2}\max_l \left\Vert\hat{\Theta}_l^{t+1} - \Bar{\Theta}_l^{t+1}\right\Vert^2, \notag\\
        &\leq \sqrt{1-\frac{n}{K}}\left\Vert\nabla F(\Bar{\Theta}^{t+1})\right\Vert\cdot 2\eta L\sqrt{1-\frac{n}{K}}\sqrt{\xi_1+\xi_2\left\Vert\nabla F(\hat{\Theta}^t)\right\Vert^2}\notag\\
        &\quad+ 2\beta \eta^2 L^2(1-\frac{n}{K})^2(\xi_1+\xi_2\left\Vert\nabla F(\hat{\Theta}^t)\right\Vert^2) \notag\\
        &\overset{(b)}{\leq} \frac{1}{2}(1-\frac{n}{K})\left\Vert\nabla F(\Bar{\Theta}^{t+1})\right\Vert^2 \notag\\\
        &\quad+2\eta^2 L^2(1-\frac{n}{K})\left[1+\beta(1-\frac{n}{K})\right](\xi_1+\xi_2||\nabla F(\hat{\Theta}^t)||^2) . \label{eq12}
    \end{align}
    \end{small}The inequality equation (b) is achieved by the inequality of arithmetic and geometric mean. Because $F (\cdot)$ is $\beta$-smooth, subtracting the first two inequalities below, there are
    \begin{small}
    \begin{align}
        ||\nabla F(\hat{\Theta}^t)||^2 \leq 2\beta[F(\hat{\Theta}^t)-F(\Theta^\ast)], \notag\\
        ||\nabla F(\Bar{\Theta}^t)||^2 \leq 2\beta[F(\Bar{\Theta}^t)-F(\Theta^\ast)], \notag\\
         ||\nabla F(\hat{\Theta}^t)||^2\leq 2\beta[F(\hat{\Theta}^t)-F(\Bar{\Theta}^t)]. 
         \label{eq13}
    \end{align}
    \end{small}Substituting \eqref{assump2} and \eqref{eq13} into \eqref{eq12} and sorting out formula, then we have
    \begin{small}
    \begin{align}
          F(\hat{\Theta}^{t+1})-&F( \Bar{\Theta}^{t+1})\notag\\
        \leq&\frac{1}{2}(1-\frac{n}{K})G^2+2\xi_1\eta^2 L^2(1-\frac{n}{K})\left[1+\beta(1-\frac{n}{K})\right] \notag\\
        +&2 \xi_2\eta^2 L^2(1-\frac{n}{K}) \left[1+\beta(1-\frac{n}{K}) \right] \left[F(\hat{\Theta}^t)-F(\Bar{\Theta}^t)\right] \notag\\
        =&A\left[F(\hat{\Theta}^t)-F(\Bar{\Theta}^t) \right]+A\frac{\xi_1}{\xi_2}+(1-\frac{n}{K})\frac{G^2}{2},\label{eq14}
    \end{align}
    \end{small}where $A =2\xi_2\eta^2 L^2(1-\frac{n}{K})\left[1+\beta(1-\frac{n}{K}) \right]$.
    
    Applying \eqref{eq14} recursively, we have
    \begin{small} 
    \begin{align}
    F(\hat{\Theta}^{t+1})-&F( \Bar{\Theta}^{t+1}) \notag\\
    \leq& A^t\left[F(\hat{\Theta}^0)-F( \Bar{\Theta}^0) \right]+ B\frac{1-A^t}{1-A},
    \end{align}
    \end{small}
     where $B =\frac{\xi_1}{\xi_2}A+(1-\frac{n}{K})\frac{G^2}{2}$. And this completes the proof.
\end{proof}

In \textbf{Theorem \ref{theorem1}}, $\hat{\Theta}^{t+1}$ is the global FL model generated based on the local layers of selected clients($s_k^l=1$) and $\bar{\Theta}^{t+1}$ is the global model aggregated by the FedAvg both at the $(t+1)$-th communication round. According to \textbf{Theorem \ref{theorem1}}, to guarantee convergence of FedLDF algorithm, $A\leq1$ must be satisfied, i.e., $\xi_2<\frac{1}{2(1+\beta)\eta^2 L^2}=\min\{\frac{1}{2\eta^2 L^2(1-\frac{n}{K})\left[1+\beta(1-\frac{n}{K}) \right]}\}$. Since $\xi_2$ must satisfy Assumption 2, we must have $\xi_2 > 0$. From \textbf{Theorem \ref{theorem1}}, we can see there exists a gap between $F(\hat{\Theta}^{t+1})$ and $F(\bar{\Theta}^{t+1})$ which is caused by the selected clients numbers for each layer. When $t$ is large enough, the gap is $\frac{(1-\frac{n}{K})\frac{G^2}{2}+\frac{\xi_1}{\xi_2}}{1-A}-\frac{\xi_1}{\xi_2}$. We can see that as $n$ increases, $1-A$ increases then the gap narrows, indicating that the algorithm converges faster. When $n$ increases to $n=K$,  the gap vanishes (reaches 0), and FedLDF degenerates into FedAvg. The correctness of the theorem will be verified in Section \uppercase\expandafter{\romannumeral3}.

\section{Experiment Result}
\subsection{Experiment Settings}
To evaluate the effectiveness of our proposed approach, we conduct a series of experiments in an image classification FL task using the CIFAR-10 dataset. We set a total of 50 clients, and in every global round, 20 clients $(K=20)$ are randomly selected for aggregation and FedLDF will select 4 clients $(n=4)$ from the 20 clients.

We adopt a VGG-9 network model with 8 Conv layers and 1 linear layer (FC) as the global model, where batch normalization (BN) and max-pooling operations are conducted following each Conv layer. We categorize the distribution of the data into i.i.d. and Non-i.i.d. cases. In the i.i.d. case, 50,000 train images are allocated to all clients randomly, and each client obtains 1,000 samples of uniformly distributed classes. For the non-i.i.d. case, the Dirichlet distribution is applied, and we set $\alpha=1$, which means Non-i.i.d. data with different dataset sizes are allocated to each client. To emphasize the advanced nature of layer-smart by FedLDF, we add random algorithm, FedADP\cite{Liu2023globalcom} and HDFL\cite{zawad2023hdfl} to the baselines. The random algorithm selected clients for each layer randomly, but in HDFL\cite{zawad2023hdfl}, clients were randomly selected at each communication round of aggregation. FedADP\cite{Liu2023globalcom} proposed a FL framework for adaptive pruning with the neuron as the smallest pruning unit. For comparison purposes, we set the baselines pruning ratio to $0.2$ to get the same communication overhead with $T=1000$. It is noticing that the test error is used to describe the model performance.
\begin{figure}[!t]
\centering
\includegraphics[width=0.45\textwidth]{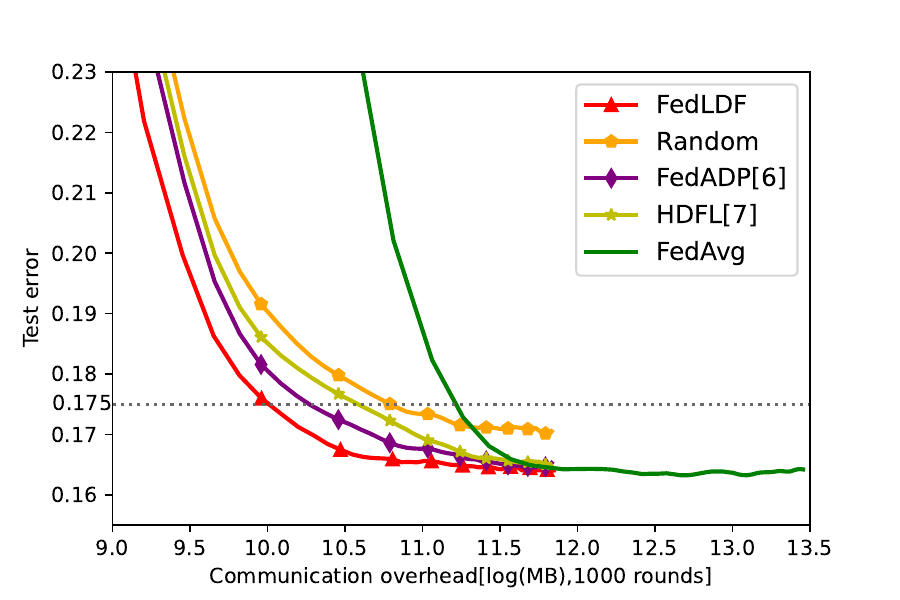}
\caption{Test error comparisons base on CIFAR10, IID.}
\label{CIFAR10,iid}
\end{figure}
\begin{figure}[!t]
\centering
\includegraphics[width=0.45\textwidth]{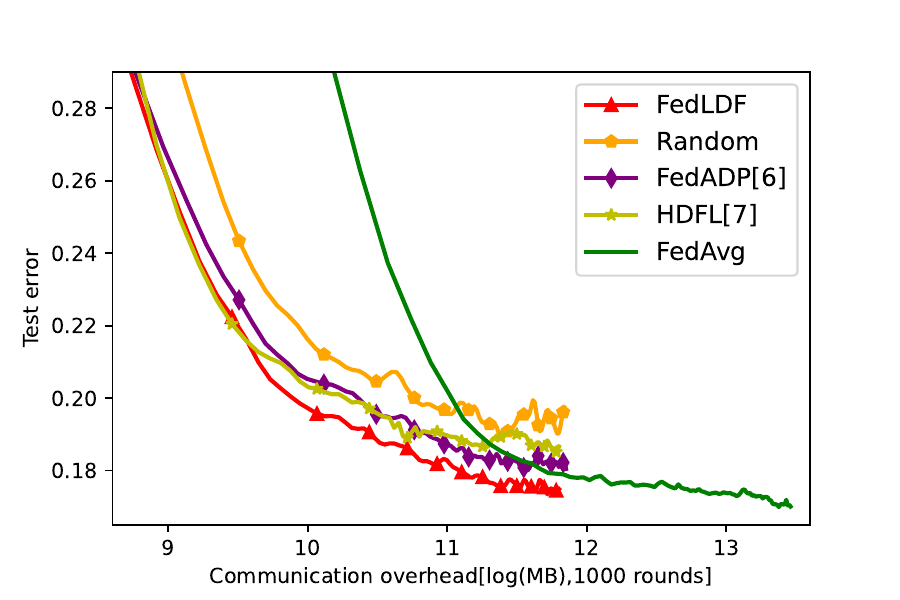}
\caption{Test error comparisons base on CIFAR10, Non-IID.}
\label{CIFAR10,Non-iid}
\end{figure}

\subsection{Performance Analysis}
In Fig. \ref{CIFAR10,iid}, the performance of listed algorithms with communication overhead is demonstrated in the i.i.d. case. For convenience to show the curve in the figure, we take the logarithm of the communication overhead. We can see that, although not all clients participate in the aggregation process, FedLDF performs even $0.4\%$ better than FedAvg at the last communication round and the communication efficiency increases $80\%$ for model uploading. This improvement is attributed to the fact that we select the clients for each layer instead of every client uploading their all layers. Similarly, from all curves, it can be observed that our algorithm achieves a faster convergence speed. When error equals $0.175$, compared with FedADP\cite{Liu2023globalcom} and HDFL\cite{zawad2023hdfl}, our algorithm saves $19.9\%$ and $35.8\%$ communication costs respectively. Moreover, there is also better model performance, reaching $1.9\%$ and $2.5\%$ at the maximum gap, respectively. This highlights the effectiveness of treating the entire layer as the smallest pruning unit of FedLDF within the FL framework. Furthermore, we observe that the performance gap between FedLDF and FedAvg diminishes as training progresses, consistent with the findings of \textbf{Theorem \ref{theorem1}}. When compared to the random algorithm, Fig. \ref{CIFAR10,iid} shows that the test error of our algorithm dropped $3.2\%$, which embodies the advantage of our algorithm in selecting clients according to the divergence. 

In Fig. \ref{CIFAR10,Non-iid}, we present the model performance in the Non-i.i.d. case.We can see that, at the same error level, our algorithm still gets a lower communication overhead. Compared to Fig. \ref{CIFAR10,iid}, we observe that FedLDF exhibits a more stable and significant model performance $(2.2\%)$ to random algorithm in Fig. \ref{CIFAR10,Non-iid}. This is because each client does not have i.i.d. data, making it crucial to select more contributory layers for uploading. Although baselines\cite{Liu2023globalcom}\cite{zawad2023hdfl} perform similarly with FedLDF at low communication overhead period, the error decline of FedLDF is more obvious later. The underlying reason is that, due to the difference in the data distribution of each client, the layer parameters of each client changed greatly in the early stage of training. Hence, the gap between our algorithm and baselines is small early, but the advantages of FedLDF are reflected in the later stage. Additionally, compared to FedADP\cite{Liu2023globalcom} and HDFL\cite{zawad2023hdfl}, our algorithm has $1.2\%$ and $0.9\%$ higher model performances, respectively. Notably, FedLDF ends up with an error rate only $0.5\%$ higher than FedAvg but with $80\%$ communication overhead savings.

\section{Conclusion}
In this work, we have presented a novel model aggregation algorithm(FedLDF) for FL. By analyzing the divergence of each layer between local and global models, our proposed mechanism enhances the FL framework with model layer divergence feedback. 
Moreover, we have employed a novel and simplified convergence analysis method to assess the expected convergence rate of the FedLDF algorithm.The extensive experiments conducted in this work have demonstrated the efficacy of our method. Specifically, our approach achieves the upload of local models with significantly lower communication overhead while preserving a high global model performance. These results underscore the potential of FedLDF in addressing the challenges associated with communication resources and model performance in FL scenarios.







\bibliographystyle{IEEEtran}
\bibliography{main}
%




\end{document}